\title{A Type-Directed Negation Elimination}
\author{Etienne Lozes\institute{LSV, ENS Cachan \& CNRS}\email{lozes@lsv.ens-cachan.fr}}
\begin{document}
\maketitle

\begin{abstract}
In the modal $\mu$-calculus, a formula is well-formed
if each recursive variable occurs underneath an even number of 
negations. By means of De Morgan's laws, it is easy to transform
any well-formed formula $\varphi$ into an equivalent 
formula without negations -- 
the negation
normal form of $\varphi$.
Moreover, if $\varphi$ is of size $n$, the negation normal form of $\varphi$ is
of the same size $\mathcal O(n)$. The full modal 
$\mu$-calculus and the negation normal form fragment are thus 
equally expressive 
and concise.

In this paper we extend this result to the higher-order modal fixed point 
logic (HFL),
an extension of the modal $\mu$-calculus with
higher-order recursive predicate transformers.
We present a  procedure 
that converts a formula of size $n$
into an equivalent formula 
without negations of size $\mathcal O(n^2)$ in the worst
case and $\mathcal O(n)$ when the number of variables
of the formula is fixed. 
\end{abstract}

\section{Introduction}

Negation normal forms are commonplace in many logical formalisms. To quote only
two examples, in first-order
logic, negation normal form is required by Skolemization,
a procedure that distinguishes between existential and universal quantifiers;
in the modal $\mu$-calculus, the negation normal form ensures 
the existence of the fixed points. More generally, the negation normal form
helps identifying the polarities~\cite{phdlaurent} of 
the subformulas of a given formula; for instance, in the modal $\mu$-calculus,
a formula in negation normal form syntactically describes the schema
of a parity game.

Converting a formula in a formula without negations -- or with negations
at the atoms only -- is usually easy.
By means of De Morgan's laws, negations can be ``pushed to the 
leaves'' of the formula. For the modal $\mu$-calculus without
propositional variables, this process completely eliminates negations,
because well-formed formulas are formulas where recursive variables 
occur underneath an even number of negations.
Moreover, in the modal $\mu$-calculus, 
if $\varphi$ is of size $n$, the negation normal form of $\varphi$ is
of the same size $\mathcal O(n)$.

The higher-order fixed point modal logic (HFL)~\cite{ViswanathanV04} is the 
higher-order extension of the modal $\mu$-calculus. In HFL, formulas
denote either predicates, or (higher-order) predicate transformers, 
each being possibly defined
recursively as (higher-order) fixed points. 
Since HFL was introduced, it was never suggested that negation
could be eliminated from the logic. On the contrary, Viswananthan and 
Viswanathan~\cite{ViswanathanV04} motivated 
HFL with an example expressing a form of rely
guarantee that uses negation, and they strove to make sure that
HFL formulas are correctly restricted so that fixed points always exist.
Negation normal forms in HFL would however be interesting:
they would
simplify the design of two-player games for HFL 
model-checking~\cite{Bruse14}, they 
could help defining a local model-checking algorithms for HFL, they
might help to define the alternation depth of a HFL formula, etc. 

We show that HFL actually admits negation elimination, and that like for
the modal $\mu$-calculus, every HFL formula can be converted into
a formula in negation normal form. 
The negation elimination procedure
is more involved due to higher-orderness. As a witness of this
increased complexity, our negation elimination procedure 
has a worst-case quadratic blow-up in the size of the formula, whereas
for the $\mu$-calculus the negation normal form is of linear size in the 
original formula.

\paragraph{Related Work}
Other examples
of higher-order recursive objects are
the higher-order pushdown 
automata~\cite{Maslov76,Cachat03},
or the higher-order recursion schemes 
(HORS)~\cite{Damm198295,HagueMOS08,CarayolS12,SalvatiW14}. 
Whereas the decidability of HFL model-checking against finite transition 
systems 
is rather simple, it took more time to understand the decidability of 
HORS model-checking 
against the ordinary (order 0) modal $\mu$-calculus. This situation
actually benefited to HORS: the intense
research on HORS produced several optimized algorithms and implementations
of HORS model-checking~\cite{BroadbentCHS13,FujimaIK13,TeraoK14}, 
whereas HFL model-checking remains a rather theoretical and unexplored topic.
HORS can be thought as recursive formulas with no boolean connectives and
least fixed points everywhere.
On the opposite, HFL allows any kinds
of boolean connectives, and in particular a form of
``higher-order alternation''. 

\paragraph{Outline} We recall the definition of HFL and all useful background
about it in Section~2. In Section~3, we sketch the ideas
driving our negation elimination and introduce the notion
of monotonization, a correspondence
between arbitrary functions and monotone ones that is at the core
of our negation elimination procedure.
We formally define the negation elimination procedure in Section~4, and make
some concluding remarks in Section~5.

\section{The Higher-Order Modal Fixed Point Logic}

We assume an infinite set $\Var = \{X,Y,Z,\ldots\}$
of variables, and a finite set 
$\Sigma=\{a,b,\dots\}$ of labels. Formulas 
$\varphi,\psi$, 
of the Higher-Order Modal Fixed Point Logic (HFL) 
are defined by the following grammar
$$
\form,\formbis \enspace ::= 
\top\mid\form\vee\formbis\mid\neg\form\mid\may{a}\form
\mid X\mid \lambda X^{\typ,v}.\ \form \mid \form\ \formbis \mid \mu X^{\typ}.\ \form 
$$
where a type $\tau$ is either the ground type $\grtype$ or an arrow
type $\sigma^v\to \tau$, and the \emph{variance} $v$ is either
$+$ (monotone), or $-$ (antitone), or $0$ (unrestricted).
For instance, $\tau_1=(\grtype^-\to\grtype)^+\to(\grtype^0\to\grtype)$ is 
a type, and $\varphi_1=$
$
\lambda F^{\grtype^-\to\grtype,+}.\ \lambda Y^{\grtype,0}.\ \mu Z^{\grtype}.\ (F\ \neg Y
)\vee \may{a} (Z\vee \neg Y)
$
is a formula. 
The sets $\mathsf{fv}(\varphi)$ and $\mathsf{bv}(\varphi)$
of free and bound variables of $\varphi$ are defined
as expected: $\mathsf{fv}(X)=\{X\}$, $\mathsf{bv}(X)=\emptyset$, 
$\mathsf{fv}(\lambda X.\ \varphi)=\mathsf{fv}(\mu X.\ \varphi)=\mathsf{fv}(\varphi)\setminus\{X\}$,
$\mathsf{bv}(\lambda X.\ \varphi)=\mathsf{bv}(\mu X.\ \varphi)=\mathsf{bv}(\varphi)\cup \{X\}$, etc.
A formula is \emph{closed} if $\mathsf{fv}(\varphi)=\emptyset$.
For simplicity, we restrict our attention to formulas $\varphi$
\emph{without variable masking}, i.e. such that for every
subformula $\lambda X.\ \psi$ (resp. $\mu X.\ \psi$), it holds that
$X\not\in \mathsf{bv}(\psi)$.

Another example is the formula
$\varphi_2=$ $(\lambda F^{\grtype^-\to\grtype,+}.\ \mu X^{\grtype}.\ F\ X)\ (\lambda Y^{\grtype,-}.\ \neg Y)$. This formula can be $\beta$-reduced to the modal 
$\mu$-calculus formula
$\varphi_2'=\mu X^{\grtype}.\ \neg X$, which does not have a fixed point
semantics. Avoiding ill-formed HFL formulas such as $\varphi_2$ cannot just
rely on counting the number of negations between $\mu X$ and the occurence
of $X$, it should also take into account function applications and the context
of a subformula.

A type judgement is a tuple $\Gamma\vdash \form:\typ$, where $\Gamma$ is
a set of assumptions of the form $X^v:\typ$. The typing 
environment $\neg\Gamma$ is the one in which every assumption $X^{v}:\tau$ is
replaced with $X^{-v}:\tau$, where $-+=-$, $--=+$, and $-0=0$.
A formula $\form$
is well-typed and has type $\tau$ if the type judgement $\vdash \form:\tau$
is derivable from the rules defined in Fig.~\ref{fig:typing}.
Intuitively, the type judgement 
$X_1^{v_1}:\tau_1,\dots,X_n^{v_n}:\tau_n\vdash \varphi:\tau$ is derivable if
asssuming that $X_i$ has type $\tau_i$, it may be infered that
$\varphi$ has type $\tau$ and that $\varphi$, viewed as a function
of $X_i$, has variance $v_i$.
For instance, $\vdash \varphi_1:\tau_1$, where $\varphi_1$ and $\tau_1$ are 
the formula and the type we defined above, but $\varphi_2$ cannot be typed, 
even with different type annotations.

\begin{figure}
\begin{mathpar}
\trule{ }{\judg{\Gamma}{\top:\grtype}}

\trule{\judg{\Gamma}{\form:\typ}\\\judg{\Gamma}{\formbis:\typ}}
{\judg{\Gamma}{\form\vee\formbis:\typ}}

\trule{\judg{{}\neg\Gamma}{\form:\typ}}{\judg{\Gamma}{\neg\form:\typ}}

\trule{\judg{\Gamma}{\form:\grtype}}{\judg{\Gamma}{\may{a}\form:\grtype}}

\trule{v\in\{+,0\}}{\judg{\Gamma\;,\;X^v: \typ}{X:\typ}}

\trule{\judg{\Gamma,X^v:\typbis}{\form:\typ}}
{\judg{\Gamma}{\lambda X^{v,\typbis}.\ \form :\typbis^v\to\typ}}

\trule{\judg{\Gamma,X^{+}:\typ}{\form:\typ}}
{\judg{\Gamma}{\mu X^\typ.\ \form :\typ}}

\trule{\judg{\Gamma}{\form:\typbis^{+}\to\typ}\\\judg{\Gamma}{\formbis:\typbis}}
{\judg{\Gamma}{\form\ \formbis :\typ}}

\trule{\judg{\Gamma}{\form:\typbis^{-}\to\typ}\\\judg{\neg\Gamma}{\formbis:\typbis}}
{\judg{\Gamma}{\form\ \formbis :\typ}}

\trule{\judg{\Gamma}{\form:\typbis^{0}\to\typ}\\
\judg{\Gamma}{\formbis:\typbis}\\\judg{\neg\Gamma}{\formbis:\typbis}
}
{\judg{\Gamma}{\form\ \formbis :\typ}}
\end{mathpar}
\caption{The type system of HFL.}
\label{fig:typing}
\end{figure}

\begin{proposition}\label{prop:type-uniqueness}\cite{ViswanathanV04}
If~~$\Gamma\vdash\varphi:\tau$ and $\Gamma\vdash\varphi:\tau'$ are derivable, 
then $\tau = \tau'$, and the two derivations coincide.
\end{proposition}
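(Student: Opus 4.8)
The plan is to prove the statement by structural induction on the formula $\varphi$, establishing the slightly stronger claim quantified over all contexts: for every $\Gamma$, if $\Gamma\vdash\varphi:\tau$ and $\Gamma\vdash\varphi:\tau'$ are both derivable, then $\tau=\tau'$ and the two derivations are identical. The key observation is that the type system of Fig.~\ref{fig:typing} is almost syntax-directed: for every syntactic shape of $\varphi$ other than an application, exactly one rule has a conclusion of that shape, and its premises refer to strictly smaller subformulas (in possibly modified contexts), so the induction hypothesis applies directly. Inducting on $\varphi$ rather than on a derivation is convenient here, precisely because the claim concerns two derivations of the same formula.

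First I would dispatch the syntax-directed cases. For $\varphi=\top$ and $\varphi=\may{a}\psi$ the type is forced to be $\grtype$; in the latter case the single premise $\Gamma\vdash\psi:\grtype$ has a unique derivation by the induction hypothesis, so the whole derivation is determined. For $\varphi=X$, the variable rule requires $X^v:\tau\in\Gamma$ with $v\in\{+,0\}$; since $\Gamma$, being a context occurring in a derivation of a masking-free formula, contains at most one assumption about $X$, the type $\tau$ is uniquely read off from $\Gamma$. For $\varphi=\neg\psi$, any derivation ends with the negation rule applied to $\neg\Gamma\vdash\psi:\tau$, and the induction hypothesis at context $\neg\Gamma$ gives uniqueness of $\tau$ and of the sub-derivation. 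For $\varphi=\psi_1\vee\psi_2$, both derivations end with the $\vee$ rule, and the induction hypothesis forces $\psi_1$ and $\psi_2$ to have, in $\Gamma$, the same unique types and derivations, whence $\tau$ (their common value) and the entire derivation are determined. For $\varphi=\lambda X^{v,\sigma}.\psi$ and $\varphi=\mu X^{\tau}.\psi$ the binders carry the relevant annotations explicitly, so the result type is $\sigma^v\to\tau'$ with $\tau'$ the type of $\psi$ in the extended context $\Gamma,X^v:\sigma$ (resp.\ the annotation $\tau$ with $\Gamma,X^+:\tau\vdash\psi:\tau$), and uniqueness again follows from the induction hypothesis applied to $\psi$.

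The one case that needs care is application, $\varphi=\psi_1\ \psi_2$, since three rules have a conclusion of that shape, one for each variance $+$, $-$, $0$ of the argument position; this is where I expect the (mild) main obstacle to lie. Any derivation of $\Gamma\vdash\psi_1\ \psi_2:\tau$ types the operator as $\Gamma\vdash\psi_1:\sigma^v\to\tau$ for some $\sigma$ and some $v\in\{+,-,0\}$. Applying the induction hypothesis to $\psi_1$ in $\Gamma$, the type $\sigma^v\to\tau$ is the same in any two derivations; in particular the variance $v$ is determined, hence so is which of the three application rules was used, as well as $\sigma$ and $\tau$. The remaining premise(s) --- $\Gamma\vdash\psi_2:\sigma$ when $v\in\{+,0\}$, and/or $\neg\Gamma\vdash\psi_2:\sigma$ when $v\in\{-,0\}$ --- concern the strictly smaller formula $\psi_2$, so by the induction hypothesis (used at $\Gamma$ and at $\neg\Gamma$ as appropriate) those sub-derivations are unique too. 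Therefore $\tau$ is unique and the two derivations of $\Gamma\vdash\psi_1\ \psi_2:\tau$ coincide. Once one isolates the observation that the variance of the operator's arrow type is itself pinned down by the induction hypothesis, no genuine difficulty remains, and the result follows.
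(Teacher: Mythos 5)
Your proof is correct. Note that the paper itself offers no proof of this proposition --- it is imported from Viswanathan and Viswanathan's original HFL paper via the citation --- so there is nothing in the text to compare against; your structural induction, with the claim universally quantified over contexts so that the induction hypothesis can be invoked at $\neg\Gamma$ and at extended environments, and with the key observation that the variance on the operator's arrow type (itself unique by the induction hypothesis) pins down which of the three application rules was used, is exactly the standard argument one would expect in the cited source.
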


If $\varphi$ is a well-typed closed formula and $\psi$ is a subformula of 
$\varphi$, we write $\typede{\psi}{\varphi}$ for the type of $\psi$ in
(the type derivation of) $\varphi$.

A labeled transition system (LTS) 
is a tuple $\mathcal T=(S,\delta)$ where $S$ is a set
of states and $\delta\subseteq S\times \Sigma\times S$ is a transition
relation. For every type $\tau$ and every LTS $\mathcal T=(S,\delta)$, 
the complete Boolean ring $\mathcal T\sem{\tau}$ of interpretations of 
closed formulas of 
type $\tau$ is defined by induction on $\tau$: $\mathcal T\sem{\grtype}=2 ^S$,
and $\mathcal T\sem{\sigma^v\to \tau}$ is the complete Boolean ring 
of all total functions
$f:\mathcal T \sem{\sigma}\to \mathcal T\sem{\tau}$ that have variance $v$, where
all Boolean operations on functions are understood pointwise.
Note that since $\mathcal T\sem{\tau}$ is a complete Boolean ring, it is
also a complete lattice, and any monotone function 
$f:\mathcal T\sem{\tau}\to\mathcal T\sem{\tau}$ admits a unique least fixed 
point.

A $\mathcal T$-valuation $\rho$
is a function that sends every variable of type $\tau$
to some element of $\mathcal T \sem{\tau}$. More precisely,
we say that $\rho$ is well-typed according to some typing environment
$\Gamma$, which we write $\rho\models \Gamma$, if $\rho(X)\in \mathcal T \sem{\tau}$
for every $X^v:\tau$ in $\Gamma$.
The semantics $\mathcal T\sem{\Gamma\vdash \varphi:\tau}$ of a derivable
typing judgement is a function that associates to every $\rho\models \Gamma$
an interpretation $\mathcal T\sem{\Gamma\vdash \varphi:\tau}(\rho)$ 
in $\mathcal T\sem{\tau}$; this interpretation
is defined as expected by induction on the derivation tree 
(see~\cite{ViswanathanV04} for details). For a well-typed
closed formula $\varphi$ of type $\grtype$, a LTS $\mathcal T=(S,\delta)$ and
a state $s\in S$, We write $s \models_{\mathcal T} \varphi$ 
if $s\in\mathcal T\sem{\vdash\varphi:\grtype}$.

\begin{example}
Let $\tau_3=(\grtype^+\to \grtype)^+\to\grtype^+\to\grtype$ and $\varphi_3=$
$$
\big(\mu F^{\tau_3}.\ \lambda G^{\grtype^+\to\grtype},
X^{\grtype}.\ (G\ X)\vee\big(F\ (\lambda Y^{\grtype}.G\ (G\ Y))\ X\big)\big)\quad
(\lambda Z^{\grtype}.\ \may{a} Z)\quad \may{b}\top.
$$
Then $s\models \varphi_3$ iff there is $n\geq 0$ such that there is a path
of the form $a^{2^n}b$ starting at $s$. Since $\{a^{2^n}b\mid n\geq0\}$ is
not a regular language, the property expressed by $\varphi_3$
cannot be expressed in the modal $\mu$-calculus.
\end{example}

\begin{proposition}\cite{ViswanathanV04}
Let $\mathcal T=(S,\delta)$ be a LTS and let $s,s'\in S$ be two bisimilar states
of $\mathcal T$. Then for any closed formula $\varphi$ of type $\grtype$,
$s\models_{\mathcal T}\varphi$ iff $s'\models_{\mathcal T}\varphi$.
\end{proposition}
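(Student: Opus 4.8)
The plan is to lift the statement from ground type to all types by means of a logical relation indexed by bisimilarity, and to prove the lifted statement by induction on the typing derivation. Fix the LTS $\mathcal T=(S,\delta)$ and let $\sim\,\subseteq S\times S$ be bisimilarity; being an equivalence, it is in particular symmetric. For each type $\tau$ I would introduce a binary relation $\approx_\tau$ on $\mathcal T\sem\tau$ by induction on $\tau$: at ground type, $U\approx_{\grtype}V$ holds iff, for all $s\sim s'$, we have $s\in U$ iff $s'\in V$; at an arrow type, $f\approx_{\sigma^v\to\tau}g$ holds iff $f(a)\approx_\tau g(b)$ whenever $a\approx_\sigma b$, the variance $v$ playing no role. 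A routine induction then gives that each $\approx_\tau$ is symmetric, that $\bot_\tau\approx_\tau\bot_\tau$, and that $\approx_\tau$ is preserved by pointwise complement, by pointwise join, and by suprema of arbitrary families (at ground type a supremum is a union, and $s\in\bigcup_i U_i$ mirrors $s'\in\bigcup_i V_i$ when each $U_i\approx_{\grtype}V_i$). I extend the relation to valuations by setting $\rho\approx_\Gamma\rho'$ iff $\rho,\rho'\models\Gamma$ and $\rho(X)\approx_\tau\rho'(X)$ for every $X^v:\tau$ in $\Gamma$. The point that makes this go through smoothly is that the variance annotation in an assumption $X^v:\tau$ does not constrain the value $\rho(X)\in\mathcal T\sem\tau$, so $\rho\models\Gamma$ iff $\rho\models\neg\Gamma$, and $\approx_\Gamma$ and $\approx_{\neg\Gamma}$ coincide as relations.

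The heart of the argument is then the lemma: if $\Gamma\vdash\varphi:\tau$ is derivable and $\rho\approx_\Gamma\rho'$, then $\mathcal T\sem{\Gamma\vdash\varphi:\tau}(\rho)\approx_\tau\mathcal T\sem{\Gamma\vdash\varphi:\tau}(\rho')$, which I would prove by induction on the derivation. The cases of $\top$, $\vee$, the variable axiom, $\lambda$-abstraction, and the three application rules are direct unfoldings of the definition of $\approx$ at arrow type (for the application rules one uses $\approx_\Gamma=\approx_{\neg\Gamma}$ to line up the premises typed under $\neg\Gamma$). For $\neg\varphi$ the induction hypothesis on the premise $\neg\Gamma\vdash\varphi:\tau$ applies, since $\rho\approx_{\neg\Gamma}\rho'$, and one concludes because complementation preserves $\approx_\tau$. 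For $\may{a}\varphi$ one invokes the zig-zag property of bisimilarity (together with symmetry of $\sim$): from $s\sim s'$ and a transition $(s,a,t)\in\delta$ with $t$ in the denotation of $\varphi$ under $\rho$, bisimilarity produces $(s',a,t')\in\delta$ with $t\sim t'$, and $U\approx_{\grtype}V$ transports the membership from $t$ to $t'$.

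The one case needing a genuine argument is $\mu X^\tau.\varphi$. Writing $F_\rho$ for the monotone operator $d\mapsto\mathcal T\sem{\Gamma,X^+:\tau\vdash\varphi:\tau}(\rho[X\mapsto d])$ on the complete lattice $\mathcal T\sem\tau$ (monotonicity being guaranteed by the typing), the induction hypothesis, applied as in the $\lambda$-case, says that $F_\rho$ and $F_{\rho'}$ are $\approx_\tau$-related, i.e.\ $d\approx_\tau e$ implies $F_\rho(d)\approx_\tau F_{\rho'}(e)$. I then need to deduce $\mathrm{lfp}(F_\rho)\approx_\tau\mathrm{lfp}(F_{\rho'})$. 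For this I would use the ordinal approximants $F^{(0)}=\bot_\tau$, $F^{(\alpha+1)}=F(F^{(\alpha)})$, $F^{(\lambda)}=\bigsqcup_{\alpha<\lambda}F^{(\alpha)}$, which form an increasing chain whose stabilised value is the least fixed point, and show $F_\rho^{(\alpha)}\approx_\tau F_{\rho'}^{(\alpha)}$ by transfinite induction: the base case is $\bot_\tau\approx_\tau\bot_\tau$, the successor case is the relatedness of $F_\rho$ and $F_{\rho'}$, and the limit case is closure of $\approx_\tau$ under suprema; passing to the stabilised value gives the claim. This fixed-point step, together with the routine-but-slightly-delicate bookkeeping needed to see that the variance decorations never obstruct the induction, is the only real obstacle. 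Finally, instantiating the lemma with the empty environment and $\varphi$ closed of type $\grtype$ yields $\mathcal T\sem{\vdash\varphi:\grtype}\approx_{\grtype}\mathcal T\sem{\vdash\varphi:\grtype}$, which unfolds to exactly $s\models_{\mathcal T}\varphi$ iff $s'\models_{\mathcal T}\varphi$ for all bisimilar $s,s'$.
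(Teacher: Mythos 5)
The paper does not prove this proposition---it is imported from Viswanathan and Viswanathan's original HFL paper with only a citation---so there is nothing in the text to compare against. Your argument is the standard one for such a statement: a bisimilarity-indexed logical relation over the type hierarchy, a fundamental lemma by induction on the typing derivation (using that variance annotations do not affect which valuations satisfy $\Gamma$, so $\approx_\Gamma$ and $\approx_{\neg\Gamma}$ coincide), and transfinite approximants for the fixed-point case; all the steps you identify check out, including closure of the relation under complement and pointwise suprema, and the instantiation at ground type correctly reads as bisimulation-invariance of the denotation. I see no gap.
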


We assume the standard notations $\wedge$, $\must{a}$ and $\nu X.\ (.)$
for the conjunction, the necessity modality, and the greatest fixed point, 
defined as the duals of $\vee$, $\may{a}$ and $\mu X.\ (.)$ respectively.

\begin{definition}[Negation Normal Form]
A HFL formula is in negation normal form if it is derivable from the grammar
$$
\form,\formbis \enspace ::= 
\top\mid \bot\mid\form\vee\formbis\mid \form\wedge \formbis \mid\may{a}\varphi\mid\must{a}\varphi\mid
X\mid \lambda X^{\typbis}.\form \mid \form\ \formbis \mid \mu X^{\typ}.\form \mid \nu X^{\typ}.\form
$$
where the $\tau$ are \emph{monotone} types, i.e. types
where all variances are equal to $+$.
\end{definition}

Note that since all variances are $+$, we omit them when writting formulas
in negation normal form. 

We say that two formulas $\varphi,\psi$ are 
equivalent, $\varphi\equiv\psi$, if for every type environment $\Gamma$, 
for every LTS $\mathcal T$, for all type $\tau$, the judgement 
$\Gamma\vdash\varphi:\tau$
is derivable iff $\Gamma\vdash\psi:\tau$ is, and in that case
$\mathcal T\sem{\Gamma\vdash\varphi:\tau}=\mathcal T\sem{\Gamma\vdash\psi:\tau}$. 

\paragraph{Model-Checking}

We briefly recall the results known about the data complexity of
HFL model-checking (see also the results of Lange~\emph{et al}
on the combined complexity~\cite{AxelssonLS07} or the
descriptive complexity~\cite{LangeL14} of HFL and extensions).

Note that if $\mathcal T=(S,\delta)$ is a finite LTS, then for all type $\tau$,
the Boolean ring $\mathcal T\sem{\tau}$ is a finite set, and every element of
$\mathcal T\sem{\tau}$ can be represented \emph{in extension}. Moreover,
the least fixed point of a monotone function 
$f:\mathcal T\sem{\tau}\to\mathcal T\sem{\tau}$ can be computed by
iterating $f$ at most $n$ times, where $n$ is the size 
of the finite boolean ring $\mathcal T\sem{\tau}$.

The order $\mathsf{ord}(\tau)$ of a type $\tau$ is defined as
$\mathsf{ord}(\grtype)=0$ and $\mathsf{ord}(\sigma^v\to\tau)=\max(\mathsf{ord}(\tau),1+\mathsf{ord}(\sigma))$. We write $\HFL{k}$ to denote the set of
closed HFL formulas $\varphi$ 
of type $\grtype$ such that all type annotations in $\varphi$ are of order at 
most $k$. For every fixed $\varphi\in \HFL{k}$, 
we call $\mathsf{MC}(\varphi)$ the problem of deciding, given a LTS 
$\mathcal T$ and a state $s$ of $\mathcal T$, wether 
$s\models_{\mathcal T}\varphi$.
\begin{theorem}\cite{AxelssonLS07}
For every $k\geq 1$, for every $\varphi\in\HFL{k}$, the problem 
$\mathsf{MC}(\varphi)$ is in $k$-EXPTIME, and there is a $\psi_k\in\HFL{k}$ such
that $\mathsf{MC}(\psi_k)$ is $k$-EXPTIME hard.
\end{theorem}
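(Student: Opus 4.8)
For the \emph{upper bound}, fix $\varphi\in\HFL{k}$ and write $\exp_0(m)=m$, $\exp_{j+1}(m)=2^{\exp_j(m)}$, so that $k$-EXPTIME $=\bigcup_c\mathrm{DTIME}(\exp_k(n^c))$. The plan is to compute $\mathcal T\sem{\vdash\varphi:\grtype}$ by bottom-up recursion over the (fixed) type derivation of $\varphi$ and then test membership of $s$. Everything rests on two lemmas, each proved by induction on $\tau$ using $\mathsf{ord}(\sigma^v\to\tau)\ge 1+\mathsf{ord}(\sigma)$: (i) for a finite LTS with $n$ states, every element of $\mathcal T\sem\tau$ has a representation -- a nested table of subsets of $S$ -- of bit-size at most $\exp_{\mathsf{ord}(\tau)}(\mathrm{poly}(n))$, hence $|\mathcal T\sem\tau|\le\exp_{\mathsf{ord}(\tau)+1}(\mathrm{poly}(n))$; (ii) every strictly increasing chain in $\mathcal T\sem\tau$ has length at most $\exp_{\mathsf{ord}(\tau)}(\mathrm{poly}(n))$, since at each step some pointwise value must gain an element of $S$. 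With these in hand the recursion is direct: $\top$, $\vee$, $\neg$, $\may a$ and application cost time polynomial in the sizes of the operands; $\lambda X.\psi$ is handled by tabulating $\psi$ over the at most $\exp_k(\mathrm{poly}(n))$ values of $X$ (whose type has order $\le k-1$); and $\mu X^\tau.\psi$ is handled by Kleene iteration, which converges within $\exp_k(\mathrm{poly}(n))$ steps by (ii) since $\mathsf{ord}(\tau)\le k$. Because $\varphi$ is fixed, its derivation has a constant number of nodes and a constant nesting depth of binders, so the whole computation runs in time $\exp_k(\mathrm{poly}(n))^{O(1)}=\exp_k(\mathrm{poly}(n))$; hence $\mathsf{MC}(\varphi)\in k$-EXPTIME.

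For the \emph{lower bound} the plan is a single generic simulation, one formula per $k$. I would start from a $k$-EXPTIME-complete problem in its alternating form: the word problem for alternating Turing machines using space $\exp_{k-1}(\mathrm{poly}(n))$ (equivalently, acceptance of a deterministic $\exp_k(\mathrm{poly}(n))$-time machine). Given such a machine $M$ and an input $w$ with $|w|=n$, I would build an LTS $\mathcal T_{M,w}$ of size $\mathrm{poly}(|M|+n)$ into whose edge structure the transition table of $M$ and the word $w$ are entirely compiled, together with one \emph{fixed} closed $\psi_k\in\HFL{k}$ of type $\grtype$ such that the initial state of $\mathcal T_{M,w}$ satisfies $\psi_k$ iff $M$ accepts $w$. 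The leverage is exactly lemma (i) read backwards: over an LTS of size $m$, order-$j$ HFL objects range over a set of size $\exp_{j+1}(\mathrm{poly}(m))$, so a configuration of an $\exp_{k-1}(\mathrm{poly}(n))$-space machine (a tape of that length, with control state and head position) can be faithfully encoded as an order-$(k-1)$ object -- addresses into the tape being, internally, order-$(k-2)$ objects -- and the predicate ``$M$ accepts from this configuration'', namely a least fixed point whose body performs one $M$-step by comparing the tape contents of two successive configurations cell by cell and using $\vee$ at existential control states and $\wedge$ at universal ones, becomes an order-$k$ formula. (For $k=1$ the configuration is just a $\mathrm{poly}(n)$-bit string, i.e.\ an order-$0$ object, and the acceptance predicate is order $1$; the scheme is uniform.)

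Filling this in requires, uniformly in $k$, a small library of ``counter gadgets'': formulas $\mathrm{Zero}_j$, $\mathrm{Succ}_j$, $\mathrm{Read}_j$ that produce the zero address, increment an order-$j$ address, and extract the symbol selected by an order-$j$ address out of an encoded tape, each defined by recursion on $j$ and each of order $O(j)$. The formula $\psi_k$ is then obtained by plugging the $j=k-2$ instances into a fixed ``universal'' acceptance checker, after which one verifies by inspection that every type annotation occurring in $\psi_k$ has order at most $k$, so that indeed $\psi_k\in\HFL{k}$. Correctness of the reduction amounts to checking that the unfolding of the least fixed point is precisely the alternating computation tree of $M$ on $w$, and that the cell-by-cell comparison enforces exactly $M$'s transition relation; neither $\psi_k$ nor the gadgets mention $M$ or $w$.

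The routine part is the upper bound, once lemmas (i)--(ii) are stated: there is nothing subtle beyond the arithmetic of iterated exponentials and the observation that a fixed formula adds only constant multiplicative overhead to the iteration. The hard part, where I expect essentially all of the work to lie, is the lower bound -- engineering $\mathcal T_{M,w}$ and the counter gadgets so that one and the same formula $\psi_k$, with type annotations of order exactly $k$ and independent of the instance, correctly drives an $\exp_{k-1}$-space alternating computation, and then proving that simulation sound and complete.
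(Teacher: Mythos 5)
This theorem is imported from the cited reference and the paper itself offers no proof, so there is no in-paper argument to compare against. Your two-part plan is the standard one and matches the proof in Axelsson, Lange and Somla: the upper bound by explicit bottom-up evaluation of the semantics, tabulating $\lambda$-abstractions over domains of size $\exp_k(\mathrm{poly}(n))$ and computing least fixed points by Kleene iteration bounded by the lattice height (your lemmas (i) and (ii) are exactly the needed size and chain-length bounds, and the arithmetic checks out since the formula is fixed); the lower bound by a fixed formula simulating an alternating $\exp_{k-1}(\mathrm{poly}(n))$-space machine, with tapes encoded as order-$(k-1)$ objects indexed by order-$(k-2)$ addresses and acceptance as an order-$k$ least fixed point. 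The one point to be careful about, which you do flag, is that for the \emph{data} complexity the formula $\psi_k$ must be independent of the instance, so the machine's transition table and the input word must both be compiled into the LTS (or the machine fixed to a hard one); this is how the cited construction proceeds. As a sketch the proposal is sound; all the genuine labour is, as you say, in the counter gadgets and the soundness/completeness of the simulation.
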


\section{Monotonization\label{sec:monotonization}}

In order to define a negation elimination procedure, the first idea is 
probably to reason like in the modal $\mu$-calculus, 
and try to ``push the negations to the leaves''. Indeed, 
there are De Morgan laws for all logical connectives, including abstraction
and application, since
\begin{mathpar}
\neg (\varphi\ \psi)\quad\equiv\quad(\neg \varphi)\ \psi

\and \mbox{and} \and
\neg (\lambda X^{v,\tau}.\psi)\quad \equiv\quad \lambda X^{-v,\tau}.\neg \psi.
\end{mathpar}
In the modal $\mu$-calculus, this idea is enough, because the 
``negation counting'' criterion ensures that each pushed negation eventually
reaches another 
negation and both anihilate. This does not happen for HFL.
Consider for instance the formula $\varphi_4=$
$$
\big(\mu X^{\grtype^0\to\grtype}.\ \lambda Y^{\grtype,0}.\
(\neg Y)\vee \big(X\ (\may{a} Y)\big)\big)\quad \top.
$$
The negation already is at the leaf, but $\varphi_4$ is not
in negation normal form. By fixed point unfolding, one can check
that $\varphi_4$ is equivalent to the infinite disjunct 
$\bigvee_{n\geq 0} \must{a}^n{\bot}$, and thus could be expressed
by $\mu X^{\grtype}.\must{a}X$. The generalization of this
strategy for arbitrary formulas would be interesting,
but it is unclear to us how it would be defined.

We follow another approach: we do not try to unfold fixed points
nor to apply $\beta$-reductions during negation elimination, but we stick
to the structure of the formula. In particular, in our approach 
a subformula denoting a function $f$ is mapped to a subformula
denoting a function $f'$ in the negation normal form. Note that
even if $f$ is not monotone, $f'$ must be monotone since it is a subformula of
a formula in negation normal form. We call $f'$ a \emph{monotonization}
of $f$.

\paragraph{Examples}

Before we formaly define monotonization, we illustrate its principles
on some examples. 

First, consider again the above formula $\varphi_4$. This formula 
contains the function $\lambda Y^{\grtype,0}.\ (\neg Y)\vee \big(X\ (\may{a} Y)\big)$. This function is unrestricted (neither monotone nor antitone).
 The monotonization of this function will be the function 
$\lambda Y^{\grtype,+},\bar Y^{\grtype,+}. \bar{Y}\vee \big(X\ (\may{a} Y)\big)$.
To obtain this function, a duplicate $\bar{Y}$ of $Y$ is introduced, 
and is used in place of $\neg Y$. Finally,
the formula $\varphi_4'=$
$$
\big(\mu X^{\grtype\to\grtype\to\grtype}.\ \lambda Y^{\grtype},\bar Y^{\grtype}.\
\bar Y\vee \big(X\ (\may{a} Y)\ (\must{a}\bar Y)\big)\big)\quad \top \qquad \bot
$$
can be used as a negation normal form of $\varphi_4$. Note that
the parameter $\top$ that was passed to the recursive function 
in $\varphi_4$ is duplicated in $\varphi_4'$, with one duplicate
that has been negated (the $\bot$ formula).

More generally, whenever a function is of type $\sigma^{0}\to\tau$, we transform
it into a function of type $\sigma_{t}^+\to\sigma_t^+\to\tau_t$
that takes two arguments of type $\sigma_t$ (the translation of 
$\sigma$). Later, when this function is applied, we make sure that
its argument is duplicated, one time positively, the other negatively.

Duplicating arguments might cause an exponential blow-up.
For instance, for the formula $\varphi_5=$
$$
(\lambda X^{\grtype}.\ X\vee\may{a}\neg X)\quad \big((\lambda Y^{\grtype,0}.\ Y\vee\may{b}\neg Y)\ \top\big)
$$
if we duplicated arguments naively, we could get the formula $\varphi_5'=$
$$
(\lambda X^{\grtype},\bar X^{\grtype}.\ X\vee\may{a}\bar X)\quad 
\big((\lambda Y^{\grtype},\bar Y^{\grtype}.\ Y\vee\may{b}\bar Y)\ \top\ \bot\big)
\quad \big((\lambda Y^{\grtype},\bar Y^{\grtype}.\ \bar Y\wedge\must{b} Y)\ \top\ \bot\big)
$$
where the original $\top$ formula has been duplicated. 
If it occurred underneath $n+2$ applications of an unrestricted
function, we would have $2^n$ copies of $\top$.
We will come back to this problem in Section~\ref{sec:negelim}.

Let us now observe how monotonization works for functions that are 
antitone. In general, if $f$ is an antitone function, both 
the ``negation at the caller'' $f_1(x)=\neg f(x)$ and 
the ``negation at the callee'' $f_2(x)=f(\neg x)$ are two monotone functions
that faithfully represent $f$. Actually, 
both of them might be needed by our negation
elimination procedure.

Consider the formula $\varphi_6=$
$$
(\lambda F^{\grtype^-\to\grtype,+}.\mu X^{\grtype}.F\ (\neg X))\quad (\lambda Y^{\grtype,-}.\neg \may{a} Y).
$$
In order to compute the negation normal form of $\varphi_6$,
we may represent $\lambda Y^{\grtype,-}.\neg \may{a} Y$ by its
``negation at the callee'', yielding the formula
$\varphi_6'=$
$$
(\lambda F^{\grtype\to\grtype}.\mu X^{\grtype}.F\ X)\quad (\lambda \bar Y^{\grtype}. \must{a} \bar Y).
$$

Conversely, consider the formula
$\varphi_7=$
$$
(\lambda F^{\grtype^-\to\grtype,-}.\ \mu X^{\grtype}.\ (\neg F)\ X)\quad (\lambda Y^{\grtype,-}.\ \neg \may{a} Y).
$$
The only difference with $\varphi_6$ is that the negation is now
in front of $F$ instead of $X$. In that case, ``negation at the callee''
does not help eliminating negations. But
``negation at the caller'' does, and yields the
negation normal form $\varphi_7'=$
$$
(\lambda \bar F ^{\grtype\to\grtype}.\ \mu X^{\grtype}.\ \bar F\ X)\quad (\lambda Y^{\grtype}.\ \may{a} Y).
$$

These examples suggest a negation elimination that proceeds 
along possibly different strategies in the case of
an application $\varphi\ \psi$, depending on the semantics of $\varphi$
and $\psi$. In the next section, we explain how the strategy is determined 
by the type of $\varphi$. For now, we focus on making more formal our notion
of monotonization.

\paragraph{Monotonization Relations}

We saw that our negation elimination bases on the ability to
faithfully represent a predicate transformer $\varphi$ by a 
monotone predicate transformer $\psi$; in this case,
we will say that $\psi$ is a \emph{monotonization} of $\varphi$. 
We now aim at defining formally this notion. More precisely, we aim at
defining
the relation $\triangleleft$ such that $\varphi \triangleleft \psi$ holds
if $\psi$ is a monotonization of $\varphi$.

\begin{figure}
$$
\begin{array}{cc}
\begin{array}{ll}
\exp{}{\grtype} & = \grtype
\\
\exp{}{\typ^+\to\typbis} & = \exp{}{\typ}^+\to\exp{}{\typbis}
\\
\exp{}{\typ^-\to\typbis} & = \exp{}{\typ}^+\to\exp{}{\typbis}
\\
\exp{}{\typ^0\to\typbis} & = \exp{}{\typ}^+\to\exp{}{\typ}^+\to\exp{}{\typbis}
\end{array}
&
\begin{array}{ll}
\exp{}{\Gamma_1,\Gamma_2} & = \exp{}{\Gamma_1},\exp{}{\Gamma_2}
\\
\exp{}{X^+:\tau} & = X^+:\exp{}{\tau}
\\
\exp{}{X^-:\tau} & = \bar X^+:\exp{}{\tau}
\\
\exp{}{X^0:\tau} & = X^+:\exp{}{\tau},\bar X^+:\exp{}{\tau}
\end{array}
\end{array}
$$
\caption{\label{fig:monotonization}Expansion of types and typing environments towards monotonization.}
\end{figure}

First of all, $\triangleleft$ relates a formula of type $\tau$ to
a formula of type $\exp{}{\tau}$ as defined in 
Fig.~\ref{fig:monotonization}: the number of arguments of 
$\varphi$ is duplicated
if $\varphi$ is unrestricted, otherwise it remains the same, and of course 
$\psi$ is monotone in all of its arguments.

In Fig.~\ref{fig:monotonization}, we also associate to every typing environment
$\Gamma$ the typing environment $\exp{}{\Gamma}$ with all variances
set to $+$, obtained after renaming all variables with variance $-$ in 
their bared version, and duplicating all variables with variance $0$.
In the remainder, we always implicitly
assume that we translate formulas and typing environments that do not initially
contain bared variables.

The relation $\triangleleft$ is then defined coinductively, in a similar way as
logical relations for the $\lambda$-calculus. Let $\mathbin R$
be a binary relation among typing judgements of the form $\Gamma\vdash \varphi:\tau$.
The relation $\mathbin R$ is
well-typed if $(\Gamma\vdash \varphi:\tau)\mathbin R(\Gamma'\vdash \varphi':\tau')$
implies $\Gamma'=\exp{}{\Gamma}$ and $\tau'=\exp{}{\tau}$. When $\mathbin R$
is well typed, we write $\varphi R_{\Gamma,\tau} \varphi'$ instead of
$(\Gamma\vdash \varphi:\tau)\mathbin R(\Gamma'\vdash \varphi':\tau')$.

\begin{definition}
A binary relation $\mathbin R$ among typing judgements
is a \emph{monotonization relation} if it is well-typed, and for all
formulas $\varphi,\varphi'$, for all $\Gamma,\tau$ such that $\varphi\mathbin R_{\Gamma,\tau}\varphi'$, 
\begin{enumerate}
\item if $\varphi,\varphi'$ are closed and $\tau=\grtype$, then $\varphi\equiv \varphi'$;
\item if $\Gamma=\Gamma',X^+:\sigma$, then 
$(\lambda X^{\sigma,+}.\ \varphi) \mathbin R_{\Gamma',\sigma^+\to\tau} (\lambda X^{\exp{}{\sigma},+}.\ \varphi')$;
\item if $\Gamma=\Gamma',X^-:\sigma$, then 
$(\lambda X^{\sigma,-}.\ \varphi) \mathbin R_{\Gamma',\sigma^-\to\tau} (\lambda \bar X^{\exp{}{\sigma},+}.\ \varphi')$;
\item if $\Gamma=\Gamma',X^0:\sigma$, then 
$(\lambda X^{\sigma,0}.\ \varphi) \mathbin R_{\Gamma',\sigma^0\to\tau} (\lambda X^{\exp{}{\sigma},+},\bar X^{\exp{}{\sigma},+}.\ \varphi')$;
\item 
if $\tau=\sigma^+\to\upsilon$, then for all $\psi,\psi'$ such that
$\psi\mathbin R_{\Gamma,\sigma} \psi'$, $(\varphi\ \psi)\mathbin{R}_{\Gamma,\upsilon} (\varphi'\ \psi')$;
\item 
if $\tau=\sigma^-\to\upsilon$, then for all $\psi,\psi',\psi''$ such that
$\psi\mathbin R_{\Gamma,\sigma} \psi'$ and $\psi'\equiv \neg \psi''$, 
$(\varphi\ \psi)\mathbin{R}_{\Gamma,\upsilon} (\varphi'\ \psi'')$;
\item 
if $\tau=\sigma^0\to\upsilon$, then for all $\psi,\psi',\psi''$ such that
$\psi\mathbin R_{\Gamma,\sigma} \psi'$ and $\psi'\equiv \neg \psi''$, 
$(\varphi\ \psi)\mathbin{R}_{\Gamma,\upsilon} (\varphi'\ \psi'\ \psi'')$.
\end{enumerate}
\end{definition}

If $(R_i)_{i\in I}$ is a family of monotonization relation, then so is
$\bigcup_{i\in I} R_i$; we write $\triangleleft$ for the largest 
monotonization relation.

\begin{example}
Consider $\varphi=(\lambda X^{\grtype,-}.\ \neg X)$. Then $\varphi\triangleleft_{\grtype^-\to\grtype} (\lambda \bar X^{\grtype,+}.\ \bar X)$. Consider also
$\psi=(\lambda X^{\grtype,0}.\ X\wedge \neg X)$. Then
$\psi\triangleleft (\lambda X^{\grtype,+},\bar X^{\grtype,+}.\ \bot)$ and
$\psi\triangleleft (\lambda X^{\grtype,+},\bar X^{\grtype,+}.\  X\wedge \bar X)$.
\end{example}

\section{Negation Elimination\label{sec:negelim}}

Our negation elimination procedure proceeds in two steps: first,
a formula $\varphi$ is translated into a formula
$\trl{+}{\varphi}$ that denotes the monotonization of $\varphi$; 
then, $\trl{+}{\varphi}$ is concisely 
represented in order to avoid an exponential blow-up.

The transformation $\trl{+}{.}$ is presented in Figure~\ref{fig:neg-elim}.
The transformation proceeds by structural induction on the formula,
and is defined as a mutual induction with the companion transformation
$\trl{-}{.}$. Whenever a negation is encountered, it is eliminated and
the dual transformation is used. As a consequence, wether $\trl{+}{.}$
or $\trl{-}{.}$ should be used for a given subformula
depends on the polarity~\cite{phdlaurent} of this subformula.

\begin{lemma}
\label{lem:neg-elim}
Let $\varphi$ be a fixed closed formula of type $\grtype$. For every
subformula $\psi$ of $\varphi$, let $\trl{+}{\psi}$ and $\trl{-}{\psi}$
be defined as
in Figure~\ref{fig:neg-elim}, and let 
$\Gamma\vdash\psi:\tau$ be the type judgement associated to $\psi$
in the type derivation of $\varphi$. Then the following statements hold.
\begin{enumerate}
\item $\exp{}{\Gamma}\vdash\trl{+}{\psi}:\exp{}{\tau}$ and 
$\exp{}{\neg\Gamma}\vdash\trl{-}{\psi}:\exp{}{\tau}$.
\item $\psi\mathbin{\triangleleft_{\Gamma,\tau}}\trl{+}{\psi}$
 and $\psi\mathbin{\triangleleft_{\Gamma,\tau}} \neg \trl{-}{\psi}$.
\end{enumerate}
\end{lemma}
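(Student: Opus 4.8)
The plan is to prove both statements simultaneously by structural induction on the subformula $\psi$, since the type-correctness statement~(1) and the monotonization statement~(2) depend on each other: checking that $\trl{+}{\psi}\mathbin\triangleleft\psi$ at an application node requires knowing the types of the translated subformulas, and the coinductive clauses defining $\triangleleft$ are stated in terms of those types. Concretely, I would define a candidate relation $R$ consisting of all pairs $(\Gamma\vdash\psi:\tau,\ \exp{}{\Gamma}\vdash\trl{+}{\psi}:\exp{}{\tau})$ for $\psi$ ranging over subformulas of $\varphi$ (with the analogous bared pairs for $\trl{-}{}$), and show $R$ is a monotonization relation; then $\triangleleft$ being the largest such relation gives statement~(2). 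The type-correctness part~(1) is established along the way, because it is a precondition for even forming the pairs in $R$.

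For the induction itself I would go case by case through the grammar following Figure~\ref{fig:neg-elim} (which the reader sees next). For the propositional and modal connectives ($\top$, $\vee$, $\may a$) the translation is homomorphic and the De Morgan duality swaps $\trl{+}{}$ with $\trl{-}{}$, so clauses (1)--(4) of the monotonization definition and the typing rules line up routinely; the key observation is that $\exp{}{}$ leaves $\grtype$ fixed and distributes over environments, so $\exp{}{\neg\Gamma}$ is just $\exp{}{\Gamma}$ up to the renaming that swaps $X$ and $\bar X$. The negation case $\psi=\neg\psi'$ is where $\trl{+}{\neg\psi'}=\trl{-}{\psi'}$ is used, and the induction hypothesis $\psi'\mathbin\triangleleft\neg\trl{-}{\psi'}$ directly yields $\neg\psi'\mathbin\triangleleft\trl{-}{\psi'}=\trl{+}{\neg\psi'}$, using that $\neg$ is an involution up to $\equiv$. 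The $\lambda$- and $\mu$-binder cases invoke clauses (2)--(4): here I must check that the bound variable's variance in $\Gamma$ matches how it is introduced (plain, bared, or duplicated) in $\exp{}{\Gamma}$, and for $\mu$ that the fixed point is over a monotone type in the translation so that the $\nu$/$\mu$ in negation normal form is well-formed — this follows because every type in $\exp{}{\tau}$ has all variances $+$.

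The main obstacle will be the application case $\psi=\varphi_0\ \psi_0$, which splits three ways according to the variance annotation on the arrow type $\typede{\varphi_0}{\varphi}=\sigma^v\to\upsilon$, matching clauses (5), (6), (7). For $v=+$ this is straightforward: apply the induction hypothesis to both $\varphi_0$ and $\psi_0$ with $\trl{+}{}$ and use clause~(5). For $v=-$ the HFL typing rule puts $\psi_0$ under $\neg\Gamma$, so I must translate $\psi_0$ with $\trl{-}{}$ and then use that $\psi_0\mathbin\triangleleft\neg\trl{-}{\psi_0}$, i.e. the witness $\psi''$ in clause~(6) is $\trl{-}{\psi_0}$ while $\psi'=\neg\trl{-}{\psi_0}\equiv\neg\psi''$; the translated application becomes $\trl{+}{\varphi_0}\ \trl{-}{\psi_0}$, consistent with the "negation at the callee" strategy illustrated by $\varphi_6$. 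The $v=0$ case is the delicate one: the argument must be supplied in both polarities, so the translation produces $\trl{+}{\varphi_0}\ \trl{+}{\psi_0}\ \trl{-}{\psi_0}$, and I must verify that this matches clause~(7) with $\psi'=\trl{+}{\psi_0}$ and $\psi''=\trl{-}{\psi_0}$, checking $\psi'\equiv\neg\psi''$ from the two induction hypotheses and confirming the arity bookkeeping against the $\exp{}{\sigma^0\to\upsilon}=\exp{}{\sigma}^+\to\exp{}{\sigma}^+\to\exp{}{\upsilon}$ clause. Throughout, the subtle point to keep straight is that $\trl{+}{}$ and $\trl{-}{}$ are indexed implicitly by the type derivation of the ambient $\varphi$ (via $\typede{}{\varphi}$ and Proposition~\ref{prop:type-uniqueness} guaranteeing its uniqueness), so the choice of strategy at each application node is forced by the type and no well-definedness ambiguity arises.
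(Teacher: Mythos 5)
Your proposal follows the same route as the paper: structural induction on the subformula $\psi$, with the two statements carried along together, and with the application case split on the variance annotation of $\typede{\psi_1}{\varphi}$. For statement 1 your treatment of the negative-variance application is exactly the paper's detailed case: the HFL rule types the argument under $\neg\Gamma$, the translation uses $\trl{-}{\psi_2}$, and the bookkeeping rests on $\exp{}{\sigma^-\to\tau}=\exp{}{\sigma}^+\to\exp{}{\tau}$ together with $\neg\neg\Gamma=\Gamma$ (and, for environments, on $\exp{}{\neg\Gamma}$ being $\exp{}{\Gamma}$ up to the $X/\bar X$ renaming). The remaining cases are handled the same way in both arguments.

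One step of your plan for statement 2 would fail as literally stated. The relation $R$ consisting only of the pairs $(\Gamma\vdash\psi:\tau,\ \exp{}{\Gamma}\vdash\trl{+}{\psi}:\exp{}{\tau})$ for $\psi$ a subformula of $\varphi$ (plus the bared pairs) is \emph{not} a monotonization relation: clauses 2--7 of the definition are closure conditions that force new pairs into $R$ which are not of subformula shape. For instance, if $\psi_1$ is a proper subformula of the body of some $\lambda X.(\psi_1\vee\psi_2)$ occurring in $\varphi$, clause 2 demands that $(\lambda X.\psi_1,\ \lambda X.\trl{+}{\psi_1})$ belong to $R$, and clauses 5--7 demand pairs $(\psi_1\ \chi,\ \trl{+}{\psi_1}\ \chi')$ for \emph{arbitrary} related arguments $\chi\mathbin{R}\chi'$, not just the syntactic argument occurring in $\varphi$. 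So you must pass to the closure of your base pairs under clauses 2--7, and then clause 1 has to be verified for every closed ground-type pair in that closure. Those pairs are (chains of) $\beta$-redexes of the form $(\lambda\vec X.\psi)\,\vec\chi$ versus $(\lambda\vec X.\trl{+}{\psi})\,\vec{\chi'}$, so the verification is a substitution argument in the style of the fundamental lemma of logical relations -- this is the actual semantic content of statement 2, and it is not covered by saying the clauses ``line up routinely.'' (To be fair, the paper's own proof is silent on statement 2 beyond ``by induction on $\psi$,'' so you are not being held to a standard the paper meets; but since you committed to an explicit coinduction witness, the witness has to be the closed-up relation, not the bare set of subformula pairs.)
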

\begin{proof}
By induction on $\psi$. We only detail the point 1 in the case of
$\psi=\psi_1\ \psi_2$ with
$\typede{\psi_1}{\varphi}=\sigma^{-}\to \tau$. Let us assume the two statements
hold for $\psi_1$ and $\psi_2$ by induction hypothesis.
Let $\Gamma$ be such that $\Gamma\vdash \psi:\tau$, 
$\Gamma\vdash \psi_1:\sigma^{-}\to\tau$, and $\neg \Gamma\vdash \psi_2:\sigma$.
By induction
hypothesis, the judgements 
$\exp{}{\Gamma}\vdash \trl{+}{\psi_1}:\exp{}{\sigma^-\to\tau}$ and
$\exp{}{\neg\neg\Gamma}\vdash \trl{-}{\psi_2}:\exp{}{\sigma}$ are derivable.
Since $\exp{}{\sigma^-\to\tau}=\exp{}{\sigma}^+\to\exp{}{\tau}$ and $\neg\neg \Gamma=\Gamma$, the typing rule for function application in the monotone case
of Fig.~\ref{fig:typing} yields
$\exp{}{\Gamma}\vdash\trl{+}{\psi_1}\ \trl{-}{\psi_2}:\exp{}{\tau}$,
which shows statement 1 for $\trl{+}{.}$. The case for $\trl{-}{.}$ is 
similar. 
\end{proof}

\begin{figure}
$$
\begin{array}{cc}
\begin{array}{r@{~=~}l}
\trl{+}{\top} & \top \\ \trl{-}{\top} & \bot \\
\trl{+}{X} & X \\ \trl{-}{X} & \bar X \\
\trl{v}{\neg\psi} & \trl{-v}{\psi}\\
\trl{+}{\may{a}\psi} & \may{a} \trl{+}{\psi}\\
\trl{-}{\may{a}\psi} & \must{a} \trl{-}{\psi}\\
\end{array}
& 
\begin{array}{r@{~=~}l}
\trl{+}{\psi_1\vee\psi_2} & \trl{+}{\psi_1}\vee\trl{+}{\psi_2}\\
\trl{-}{\psi_1\vee\psi_2} &  \trl{-}{\psi_1}\wedge\trl{-}{\psi_2}\\
\trl{v}{\lambda X^{\tau,+}.\ \psi} & \lambda X^{\exp{}{\tau}}.\ \trl{v}{\psi}\\
\trl{v}{\lambda X^{\tau,-}.\ \psi} & \lambda \bar X^{\exp{}{\tau}}.\ \trl{v}{\psi}\\
\trl{v}{\lambda X^{\tau,0}.\ \psi} & \lambda X^{\exp{}{\tau}},\bar X^{\exp{}{\tau}}.\ \trl{v}{\psi} \\
\trl{+}{\mu X^{\tau}.\ \psi} & \mu X^{\exp{}{\tau}}.\ \trl{+}{\psi}\\
\trl{-}{\mu X^{\tau}.\ \psi} & \nu \bar X^{\exp{}{\tau}}.\ \trl{-}{\psi}
\end{array}
\end{array}
$$
$$
\trl{v}{\psi_1\ \psi_2}~~=~~\left\{\begin{array}{ll}
\trl{v}{\psi_1}\ \trl{+}{\psi_2} & \mbox{if }\typede{\psi_1}{\varphi}=\sigma^+\to\eta \\
\trl{v}{\psi_1}\ \trl{-}{\psi_2} & \mbox{if }\typede{\psi_1}{\varphi}=\sigma^-\to\eta \\
\trl{v}{\psi_1}\ \trl{+}{\psi_2}\ \trl{-}{\psi_2} & \mbox{if }\typede{\psi_1}{\varphi}=\sigma^0\to\eta \\
\end{array}\right.
$$
\caption{\label{fig:neg-elim}Type-Directed Negation Elimination}
\end{figure}

\begin{corollary}\label{coro:neg-elim}
If $\varphi$ is a closed formula of type $\grtype$, then
$\varphi\equiv \trl{+}{\varphi}$ and $\trl{+}{\varphi}$ is in negation
normal form.
\end{corollary}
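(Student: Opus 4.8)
The plan is to derive Corollary~\ref{coro:neg-elim} from Lemma~\ref{lem:neg-elim} applied to $\psi=\varphi$ itself, since $\varphi$ is a subformula of $\varphi$ and carries the type judgement $\vdash\varphi:\grtype$ (so $\Gamma=\emptyset$ and $\tau=\grtype$). From point~2 of the lemma we get $\varphi\mathbin{\triangleleft_{\emptyset,\grtype}}\trl{+}{\varphi}$. Because $\triangleleft$ is a monotonization relation and $\varphi,\trl{+}{\varphi}$ are closed formulas of ground type (the latter by point~1 of the lemma, since $\exp{}{\grtype}=\grtype$ and $\exp{}{\emptyset}=\emptyset$), clause~1 of the definition of a monotonization relation gives immediately $\varphi\equiv\trl{+}{\varphi}$. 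This settles the semantic equivalence.

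It remains to check that $\trl{+}{\varphi}$ is in negation normal form, i.e.\ that it is derivable from the negation-normal-form grammar with only monotone types. I would prove the slightly stronger statement, by structural induction on an arbitrary subformula $\psi$ of $\varphi$, that both $\trl{+}{\psi}$ and $\trl{-}{\psi}$ are negation-normal-form formulas. The induction is a routine case analysis following Figure~\ref{fig:neg-elim}: $\trl{+}{\top}=\top$ and $\trl{-}{\top}=\bot$ are atoms of the grammar; variables $X$ and $\bar X$ are in the grammar; the negation case $\trl{v}{\neg\psi}=\trl{-v}{\psi}$ simply invokes the induction hypothesis with the flipped polarity and never emits a $\neg$; the modal, disjunction, abstraction, application, and fixed-point cases each produce a top connective ($\may a$, $\must a$, $\vee$, $\wedge$, $\lambda$, application, $\mu$, $\nu$) that belongs to the grammar, with immediate subformulas of the form $\trl{\pm}{\psi_i}$ handled by the induction hypothesis. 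For the type annotations, one uses point~1 of Lemma~\ref{lem:neg-elim}: every type annotation appearing in $\trl{+}{\psi}$ or $\trl{-}{\psi}$ is of the form $\exp{}{\tau}$ for some $\tau$, and an inspection of Figure~\ref{fig:monotonization} shows that $\exp{}{\tau}$ is always a monotone type (all variances on the right-hand sides are $+$). Hence every subformula transform, in particular $\trl{+}{\varphi}$, lies in the negation normal form fragment.

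The only place that requires a moment's care is making sure the induction on subformulas is well-founded and that the polarity-flipping recursion through $\neg$ actually terminates: since each syntactic $\neg$ strips one constructor, $\trl{-v}{\psi}$ is called on a strictly smaller formula, so there is no issue, and one can equally phrase the induction as simultaneous induction on $\trl{+}{\cdot}$ and $\trl{-}{\cdot}$ over the structure of $\psi$. I do not expect any genuine obstacle here; the substantive content has already been packaged into Lemma~\ref{lem:neg-elim} and into clause~1 of the definition of a monotonization relation. The corollary is essentially a bookkeeping consequence of those two facts, plus the observation that the codomain of $\exp{}{\cdot}$ consists entirely of monotone types.
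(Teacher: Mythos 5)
Your proposal is correct and matches the intended derivation: the paper states this as an immediate corollary of Lemma~\ref{lem:neg-elim}, obtained exactly as you do by instantiating point~2 at $\psi=\varphi$ (so $\Gamma=\emptyset$, $\tau=\grtype$) and invoking clause~1 of the definition of a monotonization relation for the equivalence, with the negation-normal-form claim following by inspection of Figure~\ref{fig:neg-elim} and the fact that $\exp{}{\tau}$ is always a monotone type. Your explicit induction for the NNF part just fills in bookkeeping the paper leaves implicit.
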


As observed in Section~\ref{sec:monotonization}, the duplication of
the arguments in the case $v=0$ of the monotonization of $\varphi \psi$
may cause an exponential blow-up in the size of the formula. However,
this blow-up does not happen if we allow some sharing
of identical subformulas.

Let $\varphi$ be a fixed closed formula. We say that two subformulas
$\psi_1$ and $\psi_2$ of $\varphi$ are identical if they are
syntactically equivalent and if moreover they have the same
type and are in a same typing context, i.e. if the type derivation
of $\varphi$ goes through the judgements $\Gamma_i\vdash \psi_i:\tau_i$
for syntactically equivalent $\Gamma_i$ and $\tau_i$.
For instance, in the formula
$$
(\lambda X^{\grtype\to\grtype}.\ X)\quad \big((\lambda X^{(\grtype\to\grtype)\to
(\grtype\to\grtype)}.\ X)\quad \big((\lambda Y^{\grtype\to\grtype}.Y)\ \top\big)\big)
$$
any two distinct subformulas are not identical (including the subformulas
restricted to $X$).
We call \emph{dag size}
of $\varphi$ 
the number of non-identical subformulas of $\varphi$.

\begin{lemma}
\label{lem:share}
There is a logspace computable function $\mathsf{share}(.)$
that associates to every closed formula $\varphi$ of dag size $n$ a
closed
formula $\mathsf{share}(\varphi)$ of tree size 
$\mathcal O(n\cdot |\mathsf{vars}(\varphi)|)$
such that $\varphi\equiv\mathsf{share}(\varphi)$.
\end{lemma}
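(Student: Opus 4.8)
\textbf{Proof plan for Lemma~\ref{lem:share}.}
The plan is to introduce sharing by means of $\lambda$-abstraction and $\beta$-redices, turning a dag into a tree at the price of one binder per shared subformula that is "re-used" because of the duplication in the $v=0$ case. Concretely, I would proceed bottom-up on the dag of subformulas of $\varphi$ (taking $\varphi=\trl{+}{\psi}$ for the formula $\psi$ we actually want to share, so that all duplications of the form $\trl{+}{\psi_2}\ \trl{-}{\psi_2}$ are already present). First I would compute, for each node of the dag, how many times it occurs in the fully expanded tree; the point established in Section~\ref{sec:negelim} is that this number is bounded by $2^d$ where $d$ is the nesting depth of unrestricted applications, so naive expansion is exponential but the dag itself has only $n$ nodes. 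The key observation is that a subformula $\psi$ with free variables among $X_1,\dots,X_k$ (with $k\le|\mathsf{vars}(\varphi)|$, since we never create new variable names beyond the bared duplicates, and $\mathsf{vars}$ should be read as counting those too) can be abstracted as $\lambda X_1.\cdots\lambda X_k.\,\psi'$ and then applied to $X_1,\dots,X_k$ wherever $\psi$ occurred; doing this hereditarily, starting from the nodes closest to the root, replaces each of the $n$ dag nodes by a single $\lambda$-bound copy of size $|\psi'|+O(k)$ that is invoked by an application of size $O(k)$ at each of its (now tree-many, but syntactically shared through the binder) occurrences.

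The construction itself: process the dag nodes in a topological order from root to leaves. Maintain a context of already-introduced let-bindings (encoded as $\beta$-redices $(\lambda Z.\,C[Z])\ (\text{body})$). When a node $\psi$ is shared by $m\ge 2$ parents, allocate a fresh variable $Z_\psi$ of the appropriate type $\exp{}{\tau}$, abstract the free variables of $\psi$ to get a closed-enough body $\lambda \vec X.\,\psi'$, insert the binder $(\lambda Z_\psi.\,\cdots)\,(\lambda\vec X.\,\psi')$ just above the least common position dominating all occurrences, and replace every occurrence of $\psi$ by $Z_\psi\ \vec X$. Iterating, the final tree contains: one body per dag node, of total size $\sum_i (|\psi_i'| + O(k)) = O(n) + O(n\cdot|\mathsf{vars}(\varphi)|) = O(n\cdot|\mathsf{vars}(\varphi)|)$; plus one application $Z_{\psi}\ \vec X$ of size $O(k)$ per edge of the dag, and the dag has $O(n)$ edges since each node has bounded out-degree (arity $\le 2$ in the binary syntax, plus the application in the monotonization), giving again $O(n\cdot|\mathsf{vars}(\varphi)|)$. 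The logspace bound follows because a topological sort of a dag, counting occurrences, and emitting the output tree can all be done by a deterministic transducer using only counters of logarithmic width and pointers into the input. Equivalence $\varphi\equiv\mathsf{share}(\varphi)$ is immediate since each transformation step is a $\beta$-expansion, and $\beta$-equivalent HFL formulas are semantically equal (this is the standard substitution lemma for the denotational semantics of Section~2, which respects types by Proposition~\ref{prop:type-uniqueness}); moreover $\beta$-expansion preserves membership in the negation normal form fragment because it introduces only $\lambda$, application, and fresh monotone-typed variables.

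The main obstacle I expect is the typing of the introduced binders: the abstracted variable $Z_\psi$ must be given variance $+$ (we are in the negation normal form fragment, so everything must be monotone), and this is legitimate precisely because $\mathsf{share}$ is applied to a formula already in negation normal form, where every subformula is monotone in each of its free variables; hence $\lambda Z_\psi^{\exp{}\tau,+}.\,C[Z_\psi]$ type-checks by the $\lambda$-rule of Fig.~\ref{fig:typing}, and the redex type-checks by the monotone application rule. A secondary subtlety is that the "least common position dominating all occurrences" of a shared node may sit under $\lambda$- or $\mu$-binders that capture some of the free variables of $\psi$; this is why we abstract \emph{all} free variables of $\psi$ and re-apply them at each occurrence, rather than trying to hoist the binder to the top — so the $O(|\mathsf{vars}(\varphi)|)$ factor is genuinely needed and the statement is tight in that respect. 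Finally one checks the no-variable-masking hypothesis is maintained by choosing the $Z_\psi$ fresh, and that $\beta$-expansion does not disturb the bound-variable conventions; these are routine.
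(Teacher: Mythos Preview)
Your core idea matches the paper's: encode sharing via $\beta$-redices (let-bindings), abstracting each subformula over its free variables so that the binder can be placed outside the scopes in which those variables are bound. The size accounting is also the same: $n$ bodies each of size $O(|\mathsf{vars}(\varphi)|)$, plus $O(n)$ invocations each of size $O(|\mathsf{vars}(\varphi)|)$.

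Two points where the paper is simpler than what you sketch. First, the paper does \emph{not} try to place each binder at ``the least common position dominating all occurrences'', nor does it restrict bindings to nodes with $\ge 2$ parents. It simply enumerates \emph{all} $n$ dag nodes $\varphi_1,\dots,\varphi_n$ in a subformula-before-superformula order, introduces one fresh $X_i$ per node, and emits a single flat chain
\[
\mathbf{let}\ X_1=\|\varphi_1\|\ \mathbf{in}\ \dots\ \mathbf{let}\ X_{n-1}=\|\varphi_{n-1}\|\ \mathbf{in}\ \|\varphi_n\|,
\]
where $\|\varphi_i\|$ is $\varphi_i$ with its immediate subformulas $\varphi_j$ replaced by $X_j\,\vec Y$ and the whole thing $\lambda$-abstracted over the free variables $\vec Y$ of $\varphi_i$. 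This sidesteps all the fiddly interaction you anticipate between nested binder placements and later-processed descendants, and makes the logspace bound immediate. Your description wavers between ``bind only shared nodes'' and ``one body per dag node''; the paper just takes the latter option uniformly.

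Second, the lemma as stated is for \emph{arbitrary} closed formulas, not only those already in negation normal form. The paper therefore reads off, for each free variable $Y_j$ of $\varphi_i$, its actual variance $v_j$ from the type derivation of $\varphi$, and uses $\lambda Y_j^{\sigma_j,v_j}$ in the abstraction $\lambda_i$. Your argument that ``everything is monotone because we only apply $\mathsf{share}$ after $\trl{+}{\cdot}$'' proves a weaker statement than the lemma claims (though it does suffice for Theorem~\ref{thm:nnf}); using the recorded variances handles the general case with no extra effort.
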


\begin{proof}
Let $\varphi$ be fixed, and
let $\varphi_1\dots,\varphi_n$
be an enumeration of all subformulas of $\varphi$ such
that if $\varphi_{i}$ is a strict subformula of $\varphi_{j}$, then $i<j$.
In particular, we must have $\varphi=\varphi_n$.
Pick some fresh variables $X_1,X_2,\dots, X_n\in \Var$
and let $\upsilon_i=\typede{\varphi_i}{\varphi}$.
For every $i=1,\dots n$, let
$Y_1,\sigma_1,v_1,\dots Y_k,\sigma_k,v_k$ be a fixed enumeration
of the free variables of $\varphi_i$, their types and their variances, and let
$\lambda_i(\psi)=\lambda Y_1^{\sigma_1,v_1},\dots,Y_k^{\sigma_k,v_k}.\ \psi$
and $@_i(\psi)=\psi\ Y_1\ \dots\ Y_k$. Finally, let 
$\tau_i=\sigma_1^{v_1}\to\dots\sigma_k^{v_k}\to\upsilon_i$.
For every subformula $\psi$ of $\varphi$, let $\|\psi\|$ be defined by case 
analysis on the first logical connective of $\psi$:
\begin{itemize}
\item 
if $\psi=\varphi_i=\eta Y^{\sigma}.\ \varphi_j$, where 
$\eta\in\{\lambda,\mu,\nu\}$,
then 
$\|\psi\|= \lambda_i\big(\eta Y^{\sigma}.\ @_j(X_j)\big)$;
\item 
if $\psi=\varphi_i=\varphi_j\oplus\varphi_k$, where $\oplus\in\{\vee,\wedge,\mbox{application}\}$,
then 
$\|\psi\|=\lambda_i\big(@_j(X_j)\oplus @_k(X_k)\big)$;
\item 
if $\psi=\varphi_i=\spadesuit\varphi_j$, where $\spadesuit\in\{\neg,\may{a},\must{a}\}$,
then $\|\psi\|=\lambda_i\big(\spadesuit (@_j(X_j))\big)$;
\item otherwise $\|\varphi_i\|=\lambda_i(\varphi_i)$.
\end{itemize}
Finally, 
let $\mathsf{share}(\varphi)=$
$
\mathbf{let}~X_{1}^{\tau_1}=\|\varphi_1\|~\mathbf{in}$ 
$\mathbf{let}~X_{2}^{\tau_2}=\|\varphi_2\|~\mathbf{in}$
$\dots$ 
$\mathbf{let}~X_{n-1}^{\tau_{n-1}}=\|\varphi_{n-1}\|~\mathbf{in}~\|\varphi_n\|$
where 
$\mathbf{let}~X^{\tau}=\psi~\mathbf{in}~\psi'$ is a 
macro for $(\lambda X^{\tau}.\ \psi')\ \psi$. Then $\mathsf{share}(\varphi)$
has the desired properties.
\end{proof}

\begin{theorem}
\label{thm:nnf}
There is a logspace-computable function $\nnf(.)$ that associates to every
closed HFL formula $\varphi$ (without variable masking) of type 
$\grtype$ a closed
formula $\nnf(\varphi)$
such that 
\begin{enumerate}
\item  $\varphi\equiv\nnf(\varphi)$,
\item $\nnf(\varphi)$ is in negation normal form, and 
\item $|\nnf(\varphi)|=\mathcal O(|\varphi|\cdot|\mathsf{vars}(\varphi)|)$,
\end{enumerate} where $|\psi|$ denotes the size of the tree representation
of $\psi$ (i.e. the number of symbols in $\psi$), and 
$\mathsf{vars}(\varphi)=\mathsf{fv}(\varphi)\cup\mathsf{bv}(\varphi)$ is 
the set of variables that occur in $\varphi$. 
\end{theorem}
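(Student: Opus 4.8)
The plan is to compose the two transformations developed so far: the type-directed negation elimination $\trl{+}{\cdot}$ of Figure~\ref{fig:neg-elim} (whose correctness is Corollary~\ref{coro:neg-elim}) and the sharing construction $\mathsf{share}(\cdot)$ of Lemma~\ref{lem:share}. Concretely, I would \emph{not} define $\nnf(\varphi)$ as $\mathsf{share}(\trl{+}{\varphi})$ directly, because $\trl{+}{\varphi}$ may already be of exponential size and so is not available as an intermediate object within a logspace computation. Instead, the idea is to interleave the two constructions: run the sharing construction \emph{on $\varphi$ itself}, but when emitting the body $\|\varphi_i\|$ for each subformula, emit its negation-eliminated version. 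That is, for each subformula $\varphi_i$ one introduces a bound variable $X_i$ (and, when $\varphi_i$ sits in a context whose type derivation passes through an unrestricted variance, also a bared companion, or more uniformly one works with $\trl{+}{\varphi_i}$ and $\trl{-}{\varphi_i}$ as two named definitions), and the $\mathbf{let}$-block refers back to these names rather than re-expanding. Since $\trl{v}{\cdot}$ only ever duplicates a \emph{subformula already named by some $X_j$} — never re-expands it — each clause of $\|\cdot\|$ grows by at most a constant factor, and the bound-variable abstractions $\lambda_i$ contribute at most $|\mathsf{vars}(\varphi)|$ symbols each.

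The key steps, in order, are as follows. First, fix $\varphi$ and its type derivation; by Proposition~\ref{prop:type-uniqueness} every subformula $\psi$ has a well-defined type judgement $\typede{\psi}{\varphi}$, computable in logspace. Second, observe that $\trl{+}{\varphi}$ and $\mathsf{share}(\varphi)$ each have the stated sharing/correctness properties in isolation; the task is to show they commute up to equivalence. For this I would define $\nnf(\varphi)$ as the $\mathbf{let}$-construction of Lemma~\ref{lem:share} applied to a dag whose nodes are the subformulas $\psi$ of $\varphi$ paired with a polarity bit, where the node $(\psi,+)$ has body obtained from the Figure~\ref{fig:neg-elim} clause for $\trl{+}{\psi}$ with every recursive call $\trl{v}{\psi'}$ replaced by a reference to the variable naming $(\psi',v)$, and symmetrically for $(\psi,-)$. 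Third, establish correctness: by induction on $\psi$ (mirroring the proof of Lemma~\ref{lem:neg-elim}) the value denoted by the variable for $(\psi,+)$ equals $\mathcal T\sem{\exp{}{\Gamma}\vdash\trl{+}{\psi}:\exp{}{\tau}}$, so in particular the top-level variable for $(\varphi,+)$ denotes $\mathcal T\sem{\vdash\trl{+}{\varphi}:\grtype}=\mathcal T\sem{\vdash\varphi:\grtype}$ by Corollary~\ref{coro:neg-elim}; items 1 and 2 follow. Fourth, the size bound: there are at most $2|\varphi|$ dag nodes, each $\|\cdot\|$ body is $\mathcal O(1)$ in the connective plus $\mathcal O(|\mathsf{vars}(\varphi)|)$ in the $\lambda_i/@_i$ wrappers, and there are $2|\varphi|$ nested $\mathbf{let}$'s each of constant overhead, giving $|\nnf(\varphi)| = \mathcal O(|\varphi|\cdot|\mathsf{vars}(\varphi)|)$; the whole pass is clearly logspace.

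The main obstacle I expect is bookkeeping around the \emph{negative translation of binders and the bared-variable discipline}, rather than anything deep. The clause $\trl{v}{\lambda X^{\tau,0}.\psi} = \lambda X^{\exp{}{\tau}},\bar X^{\exp{}{\tau}}.\trl{v}{\psi}$ and the application clause for $\sigma^0\to\eta$ both introduce pairs, and one must check that the sharing step respects the pairing: the free variables of $\varphi_i$ listed by $\lambda_i$ must be expanded via $\exp{}{\cdot}$ on environments (Figure~\ref{fig:monotonization}), so a free $X$ of variance $0$ becomes the two free variables $X,\bar X$, and the back-reference $@_i$ must pass both. Making the enumeration of "free variables with their variances" in the proof of Lemma~\ref{lem:share} interact correctly with this doubling — and confirming that the resulting formula still has no variable masking and is well-typed via the rules of Figure~\ref{fig:typing} with all variances $+$ — is the fiddly part. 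A clean way to discharge it is to prove a single combined lemma: for every subformula $\psi$ with judgement $\Gamma\vdash\psi:\tau$, the shared-and-translated term for $(\psi,+)$ is well-typed of type $\exp{}{\Gamma}\vdash(\cdot):\exp{}{\tau}$ and equivalent to $\trl{+}{\psi}$ (and dually for $-$), which bundles Lemma~\ref{lem:neg-elim}, Corollary~\ref{coro:neg-elim} and Lemma~\ref{lem:share} into one induction and sidesteps the need to reason about the exponential intermediate object at all.
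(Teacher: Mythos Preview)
Your proposal is correct and is essentially the paper's approach made explicit: the paper simply sets $\nnf(\varphi)=\mathsf{share}(\trl{+}{\varphi})$, observes that the dag size of $\trl{+}{\varphi}$ is linear in $|\varphi|$ (precisely because its distinct subformulas are indexed by subformulas of $\varphi$ paired with a polarity bit), and dismisses your logspace concern with the one-line remark that $\trl{+}{\varphi}$ can be computed ``on-the-fly''. Your interleaved construction---running $\mathsf{share}$ over nodes $(\psi,v)$ and emitting the Figure~\ref{fig:neg-elim} clause for $\trl{v}{\psi}$ with recursive calls replaced by back-references---is exactly what that on-the-fly composition amounts to, so the two arguments differ only in how much of the bookkeeping is spelled out.
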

\begin{proof}
Let $\nnf(\varphi)=\mathsf{share}(\trl{+}{\varphi})$. This function is 
logspace computable (
$\trl{+}{\varphi}$ can be computed ``on-the-fly'') and $\nnf(\varphi)$
is of size $\mathcal O(|\varphi|\cdot|\mathsf{vars}(\varphi)|)$
by Figure~\ref{fig:neg-elim} and Lemma~\ref{lem:share}. 
The formula $\trl{+}{\varphi}$
is in negation normal form, and $\mathsf{share}(.)$ does not introduce new
negations, so $\nnf(\varphi)$ is in negation normal form. Looking back
at Figure~\ref{fig:neg-elim}, it can be checked that its dag size is linear
in the dag size of $\varphi$, so the tree size of $\nnf(\varphi)$ is
linear in the tree size of $\varphi$. Moreover,
$\nnf(\varphi)\equiv\trl{+}{\varphi}$ by Lemma~\ref{lem:share},
and $\trl{+}{\varphi}\equiv \varphi$ by Corollary~\ref{coro:neg-elim}.
\end{proof}

\section{Conclusion}

We have considered the higher-order modal fixed point 
logic~\cite{ViswanathanV04} 
(HFL) and its fragment 
without negations, and we have shown that both formalisms 
are equally expressive. More
precisely, we have defined a procedure for transforming any closed HFL formula 
$\varphi$ denoting a state predicate 
into an equivalent formula $\nnf(\varphi)$ without negations of
size $\mathcal O(|\varphi|\cdot|\mathsf{vars}(\varphi)|)$. 
The procedure works in two phases: in a first phase, 
a transformation
we called \emph{monotonization} eliminates all negations
and represents arbitrary functions of type $\tau \to \sigma$
by functions of type $\tau\to\tau\to\sigma$ by distinguishing
positive and negative usage of the function parameter. The price
to pay for this transformation is an exponential blow-up in
the size of the formula. If the formula is represented as a circuit, however,
the blow-up is only linear. The second phase of our negation
elimination procedure thus consists in implementing
the sharing of common subformulas using higher-orderness. 
Thanks to this second phase, our procedure yields
a negation-free formula $\nnf(\varphi)$ of size 
$\mathcal O(\mathsf{size}(\varphi)\cdot |\mathsf{vars}(\varphi)|)$, 
hence quadratic in the worst case in the size of the original formula $\varphi$.

\paragraph{Typed versus Untyped Negation Elimination}
Our monotonization procedure is \emph{type-directed}: the monotonization of
$\varphi\ \psi$ depends on the variance of $\varphi$, that is statically
determined by looking at the type of $\varphi$. One might wonder if
we could give a negation elimination that would not be type-directed.
A way
to approach this question is to consider an untyped conservative extension 
of the logic where
we do not have to care about the existence of the fixed points -- for instance,
one might want to interprete $\mu X.\varphi(X)$ as the inflationary 
``fixed point''~\cite{DawarGK04}. We believe that we could adapt our monotonization
procedure to this setting, and it would indeed become a bit simpler:
we could always monotonize $\varphi\ \psi$ ``pessimistically'', as 
if $\varphi$ were neither a 
monotone nor an antitone function. For instance,
the formula $\mu X.(\lambda Y. Y)\ X$ would be translated into
$\mu X. (\lambda Y,\bar Y. Y)\ X\ \neg X$. 

In our typed setting,
it is crucial to use the type-directed monotonization we developed, because
monotizing pessimistically might yield ill-typed formulas. 
In an untyped setting, a pessimistic monotonization 
is possible, but it yields less concise
formulas, and it looses the desirable property that 
$\nnf(\nnf(\varphi))=\nnf(\varphi)$. 

So types, and more precisely variances, seem quite unavoidable. However,
strictly speaking, the monotonization we introduced is 
\emph{variance-directed}, and not really type-directed. In particular, 
our monotonization might be extended to the untyped setting, relying on 
some other static analysis than types to determine 
the variances of all functional subformulas.

\paragraph{Sharing and Quadratic Blow-Up}
The idea of sharing subterms of a $\lambda$-term
is reminiscent to implementations of $\lambda$-terms based on
hash-consing~\cite{Filliatre-hash,Goubault-hash} and to compilations of
the $\lambda$ calculus into interaction nets~\cite{Lamping90,Mackie98,Gonthier92thegeometry}.
We showed how sharing can be represented directly in the $\lambda$-calculus,
whereas hash-consing and interaction nets are concerned with representing 
sharing either in memory or as a circuit. We compile
typed $\lambda$-terms into typed $\lambda$-terms; a consequence
is that we do not manage to share 
subterms that are syntactically identical but have either different types
or are typed using different type assumptions for their free variables.
This is another difference with hash consing and interaction nets, where
syntactic equality is enough to allow sharing subterms.
It might be the case that we could allow more sharing if we did not compile
into a simply typed $\lambda$-calculus but in a ML-like language
with polymorphic types.

An interesting issue is the quadratic blow-up of our implementation
of ``$\lambda$-circuits''.
One might wonder wether a more succinct negation elimination is possible,
in particular a negation elimination with linear blow-up.
To answer this problem, it would help to answer the following
simpler problem: \emph{given a $\lambda$-term $t$ with
$n$ syntactically distinct subterms, is there an effectively
computable $\lambda$-term $t'$ of size $\mathcal O(n)$ such
that $t=_{\beta\eta} t'$}? We leave that problem for future work.

%\nocite{*}
\bibliographystyle{eptcs}
\bibliography{refs}

\end{document}